\documentclass[a4paper]{article}

\usepackage[english]{babel}
\usepackage{amsmath,amsthm,amssymb,amsfonts}
\usepackage{color}
\usepackage{graphicx}
\usepackage{subcaption}
\usepackage{url}
\usepackage{hyperref}
\usepackage{authblk}


\newtheorem{theorem}{Theorem}
\newtheorem{fact}{Fact}
\newtheorem{lemma}{Lemma}
\newtheorem{definition}{Definition}

\newcommand{\set}[1]{\left\{ #1 \right\}}

\newcommand{\G}{\mathcal{G}}
\newcommand{\U}{\mathcal{U}}
\newcommand{\A}{\mathcal{A}}
\newcommand{\VE}{\text{V}_{\text{Eve}}}
\newcommand{\VA}{\text{V}_{\text{Adam}}}

\newcommand{\Parity}{\text{Parity}}

\begin{document}

\title{Parity games and universal graphs\footnote{T. Colcombet and N. Fijalkow are supported by the DeLTA project (ANR-16-
CE40-0007) and N. Fijalkow is supported by The Alan Turing Institute under the EPSRC grant EP/N510129/1.}}
\author[1]{Thomas Colcombet}
\author[2,3]{Nathana{\"e}l Fijalkow}
\affil[1]{CNRS, IRIF, Universit{\'e} Paris Diderot, France}
\affil[2]{CNRS, LaBRI, Universit{\'e} de Bordeaux, France}
\affil[3]{The Alan Turing Institute of data science, London}
\date{}

\maketitle

\begin{abstract}
This paper is a contribution to the study of parity games and the recent constructions of three quasipolynomial time algorithms
for solving them.
We revisit a result of Czerwi{\'n}ski, Daviaud, Fijalkow, Jurdzi{\'n}ski, Lazi{\'c}, and Parys 
showing a quasipolynomial barrier for all three quasipolynomial time algorithms.
The argument is that all three algorithms can be understood as constructing a so-called separating automaton,
and to give a quasipolynomial lower bound on the size of separating automata.

We give an alternative proof of this result.
The key innovations of this paper are the notion of universal graphs and the idea of saturation.
\end{abstract}

\section{The quasipolynomial era for parity games}

\begin{figure}[!ht]
\centering
\includegraphics[width=.8\linewidth]{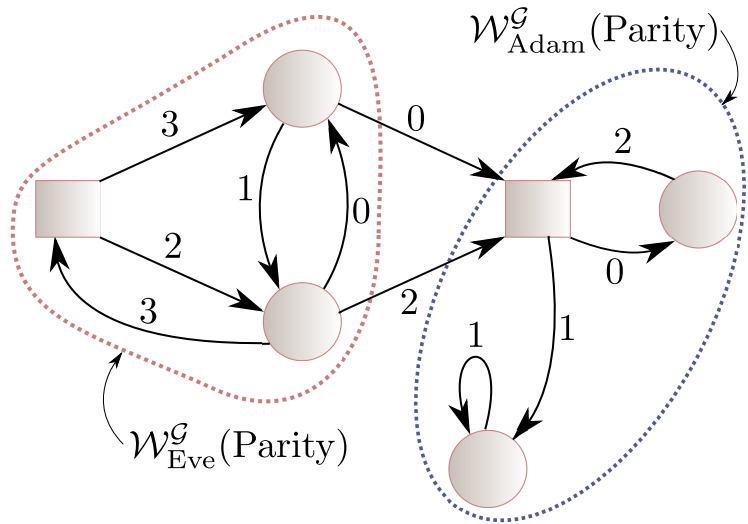}
\label{fig:parity_game}
\caption{A parity game (the dotted regions are the winning sets for each player).}
\end{figure}

The first quasipolynomial time algorithm was constructed by Calude, Jain, Khoussainov, Li, and Stephan~\cite{CJKLS17}.
Shortly after its publication two papers came out to give different presentations and correctness proofs of the algorithm,
the first by Gimbert and Ibsen-Jensen~\cite{GIJ17}, the second by Fearnley, Jain, Schewe, Stephan, and Wojtczak~\cite{FJSSW17}.
Later Boja{\'n}czyk and Czerwi{\'n}ski~\cite{BC18} explained how to understand the algorithm as the construction
of a so-called separating automaton.

The second quasipolynomial time algorithm was defined by Jurdzi{\'n}ski and Lazi{\'c}~\cite{JL17}.
It is called the succinct progress measure algorithm and is presented as an improvement over the previously known small progress measure algorithm.
A year later Fijalkow~\cite{Fij18} introduced the notion of universal trees and argued that the succinct progress measure algorithm is concisely explained using this new notion, and proved a quasipolynomial lower bound for universal trees.

A third quasipolynomial time algorithm was proposed by Lehtinen~\cite{Leh18}, based on the notion of register games.

\vskip1em
Each of the three algorithms can be understood as constructing a (different) separating automaton.
In a recent paper Czerwi{\'n}ski, Daviaud, Fijalkow, Jurdzi{\'n}ski, Lazi{\'c}, and Parys~\cite{CDFJLP18} 
(superseding the technical report~\cite{Fij18}) 
showed that any separating automaton contains in its state space a universal tree, 
which combined with the quasipolynomial lower bound for universal trees 
yields a quasipolynomial barrier applying to the three quasipolynomial time algorithms.

The purpose of this paper is to give an alternative proof of this latest result, stated in Theorem~\ref{thm:main}.
The key innovations are the notion of universal graphs and the idea of saturation.

\begin{theorem}[the equivalence between 1. and 2. was proved in~\cite{CDFJLP18}]\label{thm:main}
The following quantities are equal.
\begin{enumerate}
	\item The size of the smallest universal tree
	\item The size of the smallest separating automaton
	\item The size of the smallest universal graph
\end{enumerate}
\end{theorem}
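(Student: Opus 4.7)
The plan is to combine the known equivalence $1 = 2$ from~\cite{CDFJLP18} with a new equivalence $2 = 3$, established using the saturation idea advertised in the introduction. It then suffices to show that, for every $n$, a separating automaton of size $n$ exists if and only if a universal graph of size $n$ does.

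For the direction $3 \Rightarrow 2$ (from a universal graph to a separating automaton) I would read off a universal graph $U$ of size $n$ as an automaton $A_U$ whose states are the vertices of $U$, whose priority-labelled transitions follow the edges of $U$, and whose acceptance condition encodes the parity condition carried by $U$. Universality then guarantees that every play in an Eve-winning game embeds into $U$, yielding an accepting run of $A_U$, while a spurious accepting run on a losing word would produce a morphism contradicting the universality-defining condition on $U$. Both objects have the same number of vertices, so the size comparison is immediate.

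The converse direction $2 \Rightarrow 3$ is where saturation does the work. Starting from a separating automaton $A$ of size $n$, I would close $A$ under adding every transition $p \xrightarrow{c} q$ that is still compatible with its separating property, producing an enlarged automaton $\widetilde A$ on the same state set. Viewed as a labelled graph, $\widetilde A$ is a candidate universal graph of size $n$: for any Eve-winning game, pick a positional winning strategy, feed its reachable plays to $\widetilde A$, and use the saturation property to assemble the per-play state assignments into a consistent graph morphism into $\widetilde A$.

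The main obstacle will be to pin down the definitions precisely enough that saturation does exactly the right job: what is the class of graphs that a universal graph must dominate, and which notion of morphism is used? Choosing these wrongly makes either the forward or the converse direction fail, so the definitions must be crafted so that the natural saturation operation on automata coincides with the natural closure under embeddings on graphs. A secondary difficulty is verifying that saturation preserves separation, i.e.\ that freshly added transitions cannot be chained into an accepting run on a play won by Adam; I expect this to follow from positional determinacy of parity games, but it will require careful combinatorial bookkeeping. Once the right definitions are in place, both constructions should be essentially formal, and the size bounds follow because neither construction alters the vertex/state count.
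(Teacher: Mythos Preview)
Your decomposition differs from the paper's: you plan to import $1=2$ from~\cite{CDFJLP18} and prove only $2=3$, whereas the paper gives a self-contained cycle (universal tree $\to$ separating automaton $\to$ universal graph $\to$ universal tree) and never invokes the cited result as a black box. That is a legitimate alternative route, but it is not the paper's argument.

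For $3\Rightarrow 2$ your construction (read the universal graph as a non-deterministic automaton) is not what the paper does---the paper first saturates the universal graph into a universal \emph{tree} and then builds a \emph{deterministic} automaton from the tree via a $\min_{E_0}$ rule. Your shortcut is plausible (a path in any parity-satisfying graph maps to a path in $U$ via the defining homomorphism, giving a run; and every infinite run lives in $U$, which satisfies parity), but be precise: the automata here are \emph{safety} automata, so ``acceptance condition encodes the parity condition'' should simply mean ``all states accepting'', and you must handle the choice of initial state(s) without inflating the state count.

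The genuine gap is in $2\Rightarrow 3$. Your saturation step matches the paper's, but your worry is misplaced: saturation \emph{by definition} adds only transitions that keep the automaton separating (equivalently, keep its transition graph parity-satisfying), so there is nothing to verify there and positional determinacy plays no role. The real difficulty---which you wave away as ``assemble the per-play state assignments into a consistent graph morphism''---is the actual construction of a homomorphism from an arbitrary parity-satisfying $(n,d)$-graph $H$ into the saturated graph. The paper's key lemma (Lemma~\ref{lem:maximal_graph}) is that a maximal graph satisfying parity is \emph{tree-like}; in particular $E_0$ becomes a total order on the states, and one then sets
\[
\phi(v)\ =\ \max_{E_0}\ \set{\delta(\pi(\rho)) : \rho \text{ a path of } H \text{ ending in } v}.
\]
The tree-like closure property (from $(u,i,u')\in E$ and $(u',j,u'')\in E$ conclude $(u,\max(i,j),u'')\in E$) is exactly what makes this $\phi$ a homomorphism. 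Without that structural lemma there is no evident way to turn per-path runs into a single coherent vertex map, so this is the missing idea in your plan. Finally, the object you must embed is an arbitrary $(n,d)$-graph satisfying parity, not a game; the detour through ``pick a positional winning strategy'' is unnecessary and does not help with the assembly step.
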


The construction of a universal tree from a separating automaton is the main technical achievement of 
the paper by Czerwi{\'n}ski, Daviaud, Fijalkow, Jurdzi{\'n}ski, Lazi{\'c}, and Parys~\cite{CDFJLP18}. 
In the present paper we propose a different route: 
we define a saturation technique and show that the saturation of a separating automaton is a universal graph,
and similarly that the saturation of a universal graph is a universal tree.

\section{Universal graphs}

We fix $n,d$ two positive integers parameters.
We let $\Parity \subseteq [0,d-1]^\omega$ denote the set of infinite words such that 
the maximal priority appearing infinitely often is even.

A graph is a structure with $d$ binary relations $E_i$ for $i \in [0,d-1]$, 
with $(v,v') \in E_i$ meaning that there is an edge from $v$ to $v'$ labelled $i$.
We write $(v,i,v') \in E$ instead of $(v,v') \in E_i$.
The size of a graph is its number of vertices.
A graph homomorphism is simply a homomorphism of such structures: 
for two graphs $G,G'$, a homomorphism $\phi : G \to G'$
maps the vertices of $G$ to the vertices of $G'$ such that
\[
(v,i,v') \in E \ \implies\ (\phi(v),i,\phi(v')) \in E
\]
As a simple example that will be useful later on, note that if $G'$ is a super graph of $G$, 
meaning they have the same domain and every edge in $G$ is also in $G'$, then the identity is a homomorphism $G \to G'$.

A graph together with two sets $\VE$ and $\VA$ such that $V = \VE \uplus \VA$ is called a (parity) game:
$\VE$ is the set of vertices controlled by Eve and represented by circles, and $\VA$ the set of vertices controlled by Adam
and represented by squares.
We speak of an $(n,d)$-graph or $(n,d)$-game when it has at most $n$ vertices and $d$ priorities.

\vskip1em
A path is a sequence of triples $(v,i,v')$ in $E$ such that the third component of a triple in the sequence matches 
the first component of the next triple. 
(As a special case we also have empty paths consisting of only one vertex.)
For a path $\rho$ we write $\pi(\rho)$ for its projection over the priorities, meaning
the induced sequence of priorities.
We say that a graph satisfies parity if all paths in the graph satisfy the parity objective. 
Note that this is equivalent to asking whether all cycles are even, meaning the maximal priority appearing in the cycle is even.

\begin{definition}
A graph is $(n,d)$-\textit{universal} if it satisfies parity and 
any $(n,d)$-graph satisfying parity can be mapped homomorphically into it.
\end{definition}

We show that universal graphs can be used to construct a conceptually simple algorithm for parity games.
Consider a $(n,d)$-game $\G$ and a $(n,d)$-universal graph $\U$, we construct a safety game $\G \times \U$ 
where Eve chooses which edge to follow on the second component,
and she wins if she manages to play forever.

\begin{lemma}
Let $\G$ be a $(n,d)$-game and $\U$ a $(n,d)$-universal graph.
Then Eve has a strategy in $\G$ ensuring $\Parity$ if and only if she has a strategy ensuring to play forever in $\G \times \U$.
\end{lemma}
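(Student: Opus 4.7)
The plan is to prove the two implications by threading a homomorphism $\G[\sigma]\to\U$ (forward) or the safety strategy $\tau$ (backward) between the game $\G$ and the product $\G\times\U$. Both directions are essentially routine strategy-transfer arguments once the right graphs are identified.

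$(\Rightarrow)$ Assume Eve wins the parity objective in $\G$. I invoke positional determinacy of parity games to fix a positional winning strategy $\sigma$ for Eve. I then form the \emph{strategy graph} $\G[\sigma]$: it has the same vertex set as $\G$ but only the edges $(v,i,v')$ with $v\in\VE$ selected by $\sigma$, while every outgoing edge from an Adam vertex is kept. The infinite paths of $\G[\sigma]$ are exactly the plays of $\G$ consistent with $\sigma$, so each satisfies parity; equivalently, every cycle is even, and $\G[\sigma]$ is an $(n,d)$-graph satisfying parity. By $(n,d)$-universality of $\U$ there is a homomorphism $\phi:\G[\sigma]\to\U$. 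In $\G\times\U$ Eve maintains the invariant that the second component equals $\phi$ of the first: starting from $(v_0,\phi(v_0))$, at an Eve vertex $(v,\phi(v))$ she plays the edge prescribed by $\sigma$, whose lift in $\U$ exists by the homomorphism property; at an Adam vertex she follows the move produced by Adam, which is an edge of $\G[\sigma]$ and therefore lifts through $\phi$ to an edge of $\U$. Thus she never gets stuck and plays forever.

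$(\Leftarrow)$ Conversely, assume Eve has a strategy $\tau$ ensuring to play forever in $\G\times\U$, starting from some $(v_0,u_0)$. I define a strategy $\sigma$ for Eve in $\G$ from $v_0$ with memory $V_\U$: the memory tracks the second component that $\tau$ is currently in, and $\sigma$ plays the projection of $\tau$'s move onto the first component. Any play $\rho$ of $\G$ consistent with $\sigma$ lifts uniquely to a play $\hat\rho$ of $\G\times\U$ consistent with $\tau$, hence infinite. The projection of $\hat\rho$ onto $\U$ is an infinite path of $\U$ whose sequence of priorities is exactly $\pi(\rho)$; since $\U$ satisfies parity, $\pi(\rho)\in\Parity$, so $\sigma$ wins.

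The conceptual work is almost entirely in the forward direction, and the only non-trivial ingredient is positional determinacy, which I treat as a black box. The minor obstacle to watch out for is making sure that $\G[\sigma]$ is recorded as an $(n,d)$-graph with the correct edge set so that universality can be applied and the homomorphism $\phi$ transports both Eve's prescribed moves and any move Adam might make; once this is set up cleanly the rest is bookkeeping.
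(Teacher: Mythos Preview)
Your proof is correct and follows the same approach as the paper: use positional determinacy to form $\G[\sigma]$, invoke universality of $\U$ to get a homomorphism, and let Eve track the homomorphic image on the second component; for the converse, project to $\G$ and observe that the induced play yields an infinite path in $\U$, hence satisfies parity. You spell out the bookkeeping (the invariant in the forward direction, the explicit memory $V_\U$ in the backward direction) more carefully than the paper's terse two-sentence argument, but the underlying ideas are identical.
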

This lemma can be used to algorithmically reduce parity games to safety games,
yielding an algorithm whose complexity is proportional to the size of $\U$.

\begin{proof}
Let us assume that Eve has a strategy $\sigma$ in $\G$ ensuring $\Parity$, which can be chosen to be positional.
We consider the graph $\G[\sigma]$, by definition there exists a homomorphism from $\G[\sigma]$ to $\U$. 
We construct a winning strategy in $\G \times \U$ by playing as in $\sigma$ in the first component
and following the homomorphism on the second component.
Conversely, a strategy in $\G \times \U$ ensuring to play forever in $\G \times \U$ 
induces a strategy in $\G$ ensuring $\Parity$ since $\U$ satisfies parity.
\end{proof}

\section{Universal trees and separating automata}

\subsection*{Universal trees}
The trees we consider have three properties: they are rooted, every leaf has the same height, and the children of a node are totally ordered. 
We say that a tree $t$ embeds into another tree $T$ if $t$ is a subtree of $T$, or equivalently $t$ can be obtained by removing
some nodes of $T$.
The size of a tree is its number of leaves.
A $(n,d)$-tree is a tree with $n$ leaves each at height $\frac{d}{2}$.

\begin{definition}
A tree is $(n,d)$-\textit{universal} if it embeds all $(n,d)$-trees.
\end{definition}

\begin{figure*}[!ht]
\centering
\begin{subfigure}{.5\textwidth}
  \centering
  \includegraphics[width=.8\linewidth]{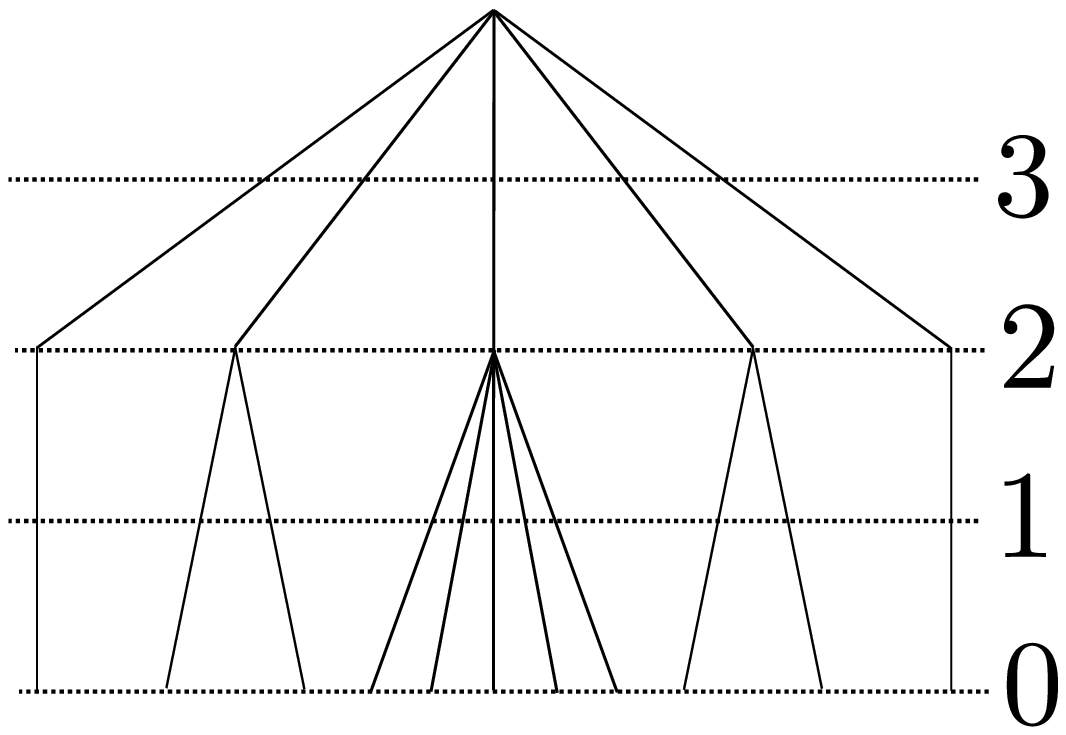}
  \label{fig:example_universal}
\end{subfigure}%
\begin{subfigure}{.5\textwidth}
  \centering
  \includegraphics[width=.8\linewidth]{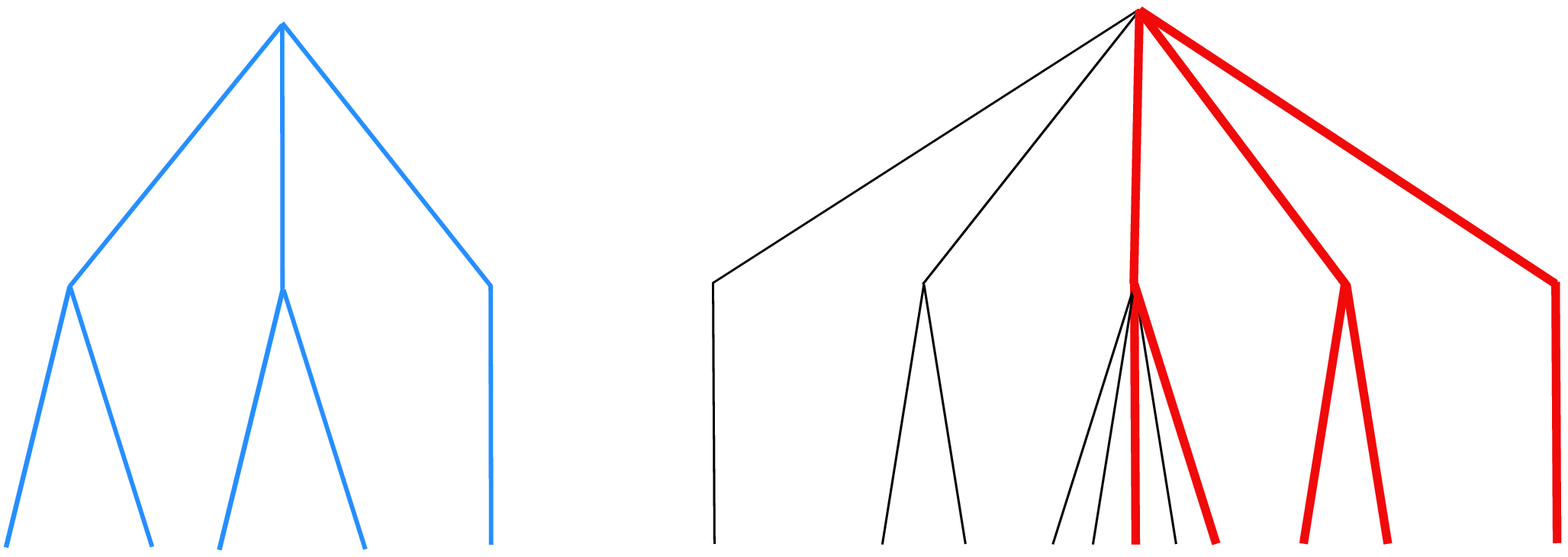}
  \label{fig:example_embedding}
\end{subfigure}
\caption{On the left, a tree for $d = 4$.
This tree is $(5,4)$-universal, and is actually the smallest having this property.
On the right, a tree of size $5$ and one possible embedding into the universal tree.}
\end{figure*}

On a tree the levels are labelled by priorities from bottom to top.
More precisely, even priorities sit on levels corresponding to nodes, 
and odd priorities inbetween levels.

\subsection*{From trees to graphs to trees}

We first explain how a tree induces a graph.
Let $T$ be a tree, we construct a graph $G(T)$ whose vertices are the leaves of $T$
and such that $(v,i,v') \in E$ if
\begin{itemize}
	\item for $i$ even, the ancestor of $v$ at level $i$ is to the left of or equal to the ancestor of $v'$ at level $i$
	\item for $i$ odd, the edge ancestor of $v$ at level $i$ is strictly to the left of the edge ancestor of $v'$ at level $i$
\end{itemize}
Equivalently for $i$ odd, $(v,i,v') \in E$ if and only if $(v,i-1,v') \in E$ and $(v',i-1,v) \notin E$.

\begin{figure}[!ht]
\centering
\includegraphics[width=.8\linewidth]{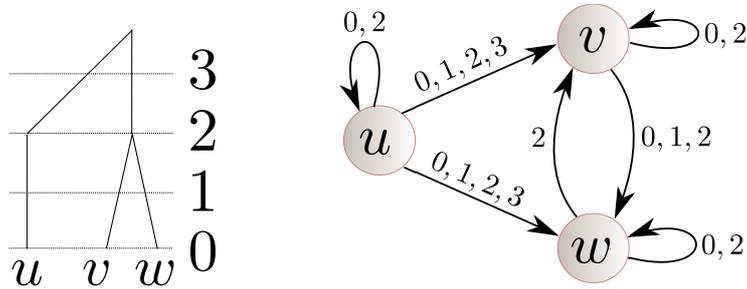}
\label{fig:example_tree_to_graph}
\caption{On the left, a tree $T$, and on the right, the graph $G(T)$.}
\end{figure}

\begin{fact}
For $t,T$ two trees, the following are equivalent.
\begin{itemize}
	\item $t$ embeds in $T$
	\item there exists a homomorphism $\phi : G(t) \to G(T)$
\end{itemize} 
\end{fact}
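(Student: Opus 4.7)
The plan is to prove the two directions separately. The forward direction, from an embedding $t \hookrightarrow T$ to a homomorphism $G(t) \to G(T)$, is essentially a book-keeping exercise. The backward direction, which extends a homomorphism defined only on leaves to an embedding of the whole tree, is where the content lies.

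For the forward direction, given an embedding $\iota : t \hookrightarrow T$ I take the homomorphism $\phi$ to be the restriction of $\iota$ to the leaves of $t$. The crucial property is that an embedding preserves the level of nodes and the left-to-right order within each level. If $(v, i, v') \in E_{G(t)}$ with $i$ even, let $x$ and $x'$ denote the ancestors of $v$ and $v'$ at level $i$ in $t$; since $\iota$ maps each level to itself and preserves the ancestor relation, $\iota(x)$ and $\iota(x')$ are the ancestors of $\phi(v)$ and $\phi(v')$ at level $i$ in $T$. The inequality $x \le x'$ in the level-$i$ order of $t$ propagates to $\iota(x) \le \iota(x')$ in $T$, which is exactly the condition for $(\phi(v), i, \phi(v')) \in E_{G(T)}$. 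The odd case is handled by the equivalent formulation $(v, i, v') \in E$ iff $(v, i-1, v') \in E$ and $(v', i-1, v) \notin E$ recorded in the text, so no separate argument is needed.

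For the backward direction, given a homomorphism $\phi : G(t) \to G(T)$, I extend it to a map $\hat\phi$ on all nodes of $t$ by declaring, for each internal node $x$ of $t$ at level $i$, $\hat\phi(x)$ to be the ancestor at level $i$ in $T$ of $\phi(v)$, where $v$ is any leaf of $t$ below $x$. The first task is well-definedness: if $v, v'$ are both below $x$ in $t$, their level-$i$ ancestors in $t$ coincide, so both $(v, i, v')$ and $(v', i, v)$ are edges of $G(t)$. Transporting through $\phi$ yields the corresponding even-priority edges in $G(T)$ in both directions, and this is exactly what forces the level-$i$ ancestors of $\phi(v)$ and $\phi(v')$ in $T$ to be equal.

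It then remains to verify that $\hat\phi$ is an embedding. Preservation of levels holds by construction, and preservation of the ancestor relation is immediate because the leaves below a descendant $y$ of $x$ in $t$ are a subset of the leaves below $x$, so the same leaf $v$ can be used to compute $\hat\phi(x)$ and $\hat\phi(y)$. The key point is preservation of the left-to-right sibling order at even levels: if $x$ and $x'$ are distinct siblings in $t$ at level $i-2$ with $x$ left of $x'$, pick leaves $v, v'$ below $x, x'$ respectively; then $(v, i-1, v') \in E_{G(t)}$, so $(\phi(v), i-1, \phi(v')) \in E_{G(T)}$, which says that the level-$(i-2)$ ancestors $\hat\phi(x)$ and $\hat\phi(x')$ in $T$ are strictly ordered in the correct sense. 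Injectivity and the extension of the ordering argument to non-sibling nodes at the same level follow by pushing the same reasoning down to the lowest common ancestor. I expect the main obstacle to be not any single step but rather keeping the odd-versus-even bookkeeping straight; once the well-definedness of $\hat\phi$ is established, everything else falls out of the definition of $G(\cdot)$.
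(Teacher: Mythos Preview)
The paper states this fact without proof, so there is nothing to compare your argument against; your proposal is a correct and standard way to fill in the details.

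One small remark on the forward direction: your claim that for odd $i$ ``no separate argument is needed'' is slightly too quick. The equivalent formulation $(v,i,v') \in E \iff (v,i-1,v') \in E \wedge (v',i-1,v) \notin E$ involves a \emph{non}-edge, and homomorphisms do not in general reflect non-edges. What makes it work here is that an embedding $\iota$ is injective, so it preserves \emph{strict} order at each even level, and hence the condition $(v',i-1,v)\notin E_{G(t)}$ does transport to $(\phi(v'),i-1,\phi(v))\notin E_{G(T)}$. This is implicit in your setup but worth saying explicitly. The backward direction---defining $\hat\phi(x)$ as the level-$i$ ancestor in $T$ of $\phi(v)$ for any leaf $v$ below $x$, checking well-definedness via the two-way even edges, and recovering strict sibling order via the odd edges---is exactly the right construction and is carried out cleanly.
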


We define the converse transformation.
We say that a graph satisfying the following four properties is tree-like.
\begin{itemize}
	\item if $(v,i,v') \in E$ and $(v',j,v'') \in E$, then $(v,\max(i,j),v'') \in E$
	\item for $i$ even, $E_i$ is total: either $(v,i,v') \in E$ or $(v',i,v) \in E$ (possibly both)
	\item for $i$ even, $E_i$ is reflexive: $(v,i,v) \in E$
	\item for $i$ odd, $(v,i,v') \in E$ if and only if $(v,i-1,v') \in E$ and $(v',i-1,v) \notin E$
\end{itemize}
Note that the first item implies that $E_i$ is transitive 
and the second and third items imply that $E_0 \subseteq E_2 \subseteq \cdots \subseteq E_{d-1}$.
Further, for $i$ odd $E_i$ is non-reflexive.

Let $G$ be a tree-like graph, we construct a tree $T(G)$ whose leaves are the vertices of $G$.
For $i$ even, the nodes at level $i$ are the equivalence classes of vertices for the total preorder $E_i$.

\begin{fact}
Let $t$ be a tree, then $G(t)$ is tree-like, and $T(G(t)) = t$.
\end{fact}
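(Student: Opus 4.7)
The plan is to prove the two assertions in turn, both by direct translation of the definitions. For tree-likeness of $G(t)$, I would verify the four axioms one by one. The reflexivity and totality of $E_i$ at even $i$ are immediate since every leaf has a unique ancestor at each even level, and the nodes at a given level of $t$ are totally ordered left-to-right. The odd/even equivalence is literally the definition of the odd edges of $G(t)$, as already observed in the excerpt.

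The only axiom requiring genuine work is the max-composition property. I would first isolate the underlying monotonicity lemma: the map sending a leaf to its ancestor at level $\ell$ is order-preserving in $\ell$, so if leaf $u$'s ancestor at level $i$ lies weakly left of leaf $w$'s ancestor at level $i$, the same relation holds at every level $\ell \geq i$. With this in hand I would split by the parities of $i$ and $j$. The even-even case follows directly from the monotonicity lemma. For mixed and odd-odd cases I would translate the ``strictly left'' condition on edge-ancestors back into strict left-of-ness at the node level (using that two edges are equal iff their lower endpoints coincide), and then appeal to the alternative characterization $(v,i,v')\in E \iff (v,i-1,v')\in E \wedge (v',i-1,v)\notin E$ for odd $i$ to conclude.

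For the identity $T(G(t))=t$, the vertex sets already match by construction (both are the leaves of $t$). At each even level $i$, I would show that the equivalence classes of the preorder $E_i$ on the leaves are exactly the sets of leaves sharing a common ancestor at level $i$ in $t$; this yields a bijection between level-$i$ nodes of $T(G(t))$ and level-$i$ nodes of $t$. The induced total order on equivalence classes matches the left-to-right order on $t$ by the very definition of $E_i$, and the inclusion $E_0 \subseteq E_2 \subseteq \cdots \subseteq E_{d-1}$ that follows from the axioms corresponds exactly to the parent relation in $t$, recovering the full tree structure.

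The main obstacle, though really a bookkeeping one rather than a conceptual one, is verifying the max-composition axiom in the mixed-parity cases, where one has to carefully relate edge-ancestors at odd levels to node-ancestors at adjacent even levels. Everything else is a mechanical unpacking of the definitions of $G(\cdot)$ and $T(\cdot)$.
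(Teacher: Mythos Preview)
The paper states this result as a \emph{Fact} and provides no proof; it simply asserts it and moves on. Your proposal is therefore not comparable to a paper proof, but it is a correct and natural way to fill in the omitted details: the four tree-like axioms follow from the total left-to-right order on each level of $t$ together with the monotonicity of ancestors under increasing level, and the identity $T(G(t))=t$ reduces to the observation that the $E_i$-equivalence classes (for even $i$) are exactly the sets of leaves sharing an ancestor at level $i$. Your identification of the max-composition axiom in the mixed-parity cases as the only nontrivial bookkeeping is accurate.
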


In the remainder of this article 
we identify trees and tree-like graphs through the two reciprocal transformations described above.
Hence we see trees as special graphs.

\subsection*{Separating automata}
The automata we consider are non-deterministic safety automata over infinite words on the alphabet $[0,d-1]$, 
where safety means that all states are accepting: a word is rejected if there exist no run for it.

In the following we say that a path in a graph is accepted or rejected by an automaton; this is an abuse of language
since what the automaton reads is only the priorities of the corresponding path.

\begin{definition}
An automaton is $(n,d)$-\textit{separating} if the two following properties hold.
\begin{itemize}
	\item For all $(n,d)$-graphs satisfying parity, the automaton accepts all paths in the graph
	\item The automaton rejects all paths not satisfying parity
\end{itemize}
\end{definition}
We let $\Parity_n$ denote the union over all $(n,d)$-graphs satisfying parity of their set of paths.
\[
\Parity_n = \bigcup \set{ \text{Paths}(G) : G (n,d)-\text{graph satisfying } \Parity}
\]

\begin{figure}[!ht]
\centering
\includegraphics[width=.5\linewidth]{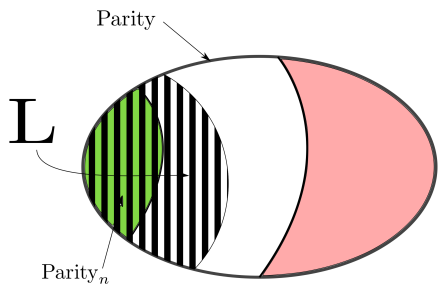}
\label{fig:separating_automata}
\caption{Separating automata.}
\end{figure}

The following lemma justifies the definition of separating automata.

\begin{lemma}
Let $L$ be the language recognised by a $(n,d)$-separating automaton.
Then for all $(n,d)$-games $\G$, Eve has a strategy ensuring $\Parity$ 
if and only if she has a strategy ensuring $L$.
\end{lemma}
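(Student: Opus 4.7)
The plan is to prove the two directions separately, each using one of the two properties in the definition of a separating automaton; the proof parallels the argument already given for the previous lemma about universal graphs.

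For the forward direction, suppose Eve has a winning strategy in $\G$ for $\Parity$. By positional determinacy of parity games, we may assume this strategy $\sigma$ is positional. The induced graph $\G[\sigma]$ (obtained by keeping, at each Eve vertex, only the edge prescribed by $\sigma$, and all outgoing edges at Adam vertices) then has at most $n$ vertices and $d$ priorities, i.e.\ it is an $(n,d)$-graph. Because $\sigma$ wins for parity, every infinite path in $\G[\sigma]$ satisfies the parity objective, so $\G[\sigma]$ satisfies parity as a graph. By the first property in the definition of a separating automaton, applied to $\G[\sigma]$, the automaton accepts every path of $\G[\sigma]$; in other words every play consistent with $\sigma$ has its priority sequence in $L$. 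Hence $\sigma$ also ensures $L$.

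For the backward direction, suppose Eve has a strategy $\sigma'$ in $\G$ ensuring $L$. The second property in the definition of a separating automaton says that the automaton rejects every path not satisfying parity, which is precisely the inclusion $L\subseteq \Parity$. Consequently any play consistent with $\sigma'$ lies in $L$ and therefore in $\Parity$, so $\sigma'$ ensures $\Parity$.

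The only nontrivial ingredient is the appeal to positional determinacy of parity games in the forward direction: without it, the strategy graph $\G[\sigma]$ could be infinite, and the first property of separating automata (which applies to $(n,d)$-graphs only) could not be invoked directly. The backward direction is immediate once Property~2 is rephrased as a language inclusion. Note in particular that we do not need to reason about runs of the automaton along plays, because "ensuring $L$" is purely a condition on the sequence of priorities produced by a play, not on any accompanying run.
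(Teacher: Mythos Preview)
Your proof is correct and follows essentially the same approach as the paper: positional determinacy to obtain an $(n,d)$-graph $\G[\sigma]$ satisfying parity, then the first defining property of separating automata for the forward direction; and reading the second property as $L\subseteq\Parity$ for the converse. Your write-up is simply more detailed than the paper's, including the explicit remark on why positionality is essential.
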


\begin{proof}
Let us assume that Eve has a strategy $\sigma$ ensuring $\Parity$, which can be chosen positional.
We construct the $(n,d)$-graph $\G[\sigma]$, by definition it satisfies parity, hence the strategy $\sigma$ also ensures $L$
thanks to the first item of the definition of separating automata.

Conversely, the second item of the definition of separating automata reads $L \subseteq \Parity$,
which implies that any strategy ensuring $L$ also ensures $\Parity$.
\end{proof}

\section{The saturation technique}

The following theorem characterises graphs satisfying parity using homomorphisms into trees.
Since a tree is a tree-like graph, by ``homomorphism from a graph to a tree'' 
we mean homorphisms from a graph to a tree-like graph, 
which are structures over the same signature.

This theorem is an elaboration of the classic result about signature assignments or progress measures being witnesses 
of positional winning strategies in parity games (see \textit{e.g.} Jurdzi{\'n}ski's small progress measure treatment~\cite{J00}).
The novelty here is to phrase this theorem using graph homomorphisms and universal trees,
and giving a different argument using saturation.

\begin{theorem}\label{thm:progress_measure}
Let $G$ be a $(n,d)$-graph and $T$ a $(n,d)$-universal tree. 
The following statements are equivalent
\begin{enumerate}
	\item $G$ satisfies parity 
	\item there exists a $(n,d)$-tree $t$ and a homomorphism $\phi : G \to t$
	\item there exists a homomorphism $\psi : G \to T$
\end{enumerate}
\end{theorem}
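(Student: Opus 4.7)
My plan is to establish the cycle of implications $(1) \Rightarrow (2) \Rightarrow (3) \Rightarrow (1)$. The implications $(2) \Rightarrow (3)$ and $(3) \Rightarrow (1)$ are essentially corollaries of earlier material: for $(2) \Rightarrow (3)$, the universality of $T$ says every $(n,d)$-tree $t$ embeds into $T$, which by the first Fact translates into a homomorphism $G(t) \to G(T)$, and composing with $\phi$ yields $\psi$; for $(3) \Rightarrow (1)$, any tree-like graph (in particular $T$) satisfies parity, and a homomorphism sends paths to paths with the same priority sequence, so any infinite path in $G$ projects to an infinite path in $T$ satisfying parity.

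The core of the proof is $(1) \Rightarrow (2)$, which I would establish via the saturation technique. Concretely, I would iteratively enlarge the edge set of $G$, keeping the vertex set fixed and maintaining the invariant that no odd cycle is present, until the resulting super-graph $G'$ is tree-like. Three of the tree-like axioms (transitivity in the $\max$-sense, reflexivity of the even $E_i$, and the odd/even coherence) can be enforced monotonically: every new edge added is witnessed by an existing path whose maximal priority equals the priority of the new edge, so a hypothetical new odd cycle would trace back to an already-present odd cycle, contradicting the invariant.

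The delicate fourth closure operation is totality of $E_i$ for $i$ even. Given a pair $(v,v')$ unrelated by $E_i$ in either direction in the current saturated super-graph, I claim that at least one of the two orientations $(v,i,v')$, $(v',i,v)$ can be added without creating an odd cycle. Indeed, were both orientations forbidden, then after transitivity closure there would already be edges $(v', k, v)$ and $(v, k', v')$ in $E$ with both $k, k'$ odd and strictly greater than $i$; composing them by transitivity forces $(v, \max(k,k'), v) \in E$, an odd self-loop, contradicting the invariant. Once $G'$ is tree-like, the correspondence $T(\cdot)$ from the excerpt yields a tree $t = T(G')$ with at most $n$ leaves at height $d/2$, i.e.\ an $(n,d)$-tree, and the identity inclusion $G \hookrightarrow G'$ is the desired homomorphism $G \to t$.

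The main obstacle, beyond bookkeeping the closure operations, is precisely the transitivity-based contradiction for the totality step: ruling out the simultaneous failure of both orientations is what makes the saturation process converge while preserving the absence of odd cycles. A secondary subtlety is that the fourth tree-like axiom is a biconditional, which I would address by viewing odd edges as derived from the even ones and running the saturation on the even relations, deducing odd edges only at the end; this avoids any apparent non-monotonicity in the closure process.
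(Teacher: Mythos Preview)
Your proposal is correct and follows essentially the same route as the paper: the cycle $(1)\Rightarrow(2)\Rightarrow(3)\Rightarrow(1)$, with $(1)\Rightarrow(2)$ handled by saturating $G$ to a maximal parity-satisfying super-graph and arguing it is tree-like, and the totality argument (both orientations failing forces an odd cycle) is exactly the paper's key step in Lemma~\ref{lem:maximal_graph}. The only cosmetic differences are that the paper packages the ``maximal $\Rightarrow$ tree-like'' argument as a standalone lemma rather than as a sequence of closure operations, and for $(3)\Rightarrow(1)$ it argues directly on a cycle in $G$ (collapsing it via transitivity in $T$ to an odd self-loop) rather than first asserting that $T$ itself satisfies parity.
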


For the implication $1 \implies 2$ we introduce maximal graphs satisfying parity.
We say that a graph $G$ is a \textit{maximal graph satisfying parity} 
if adding any edge to $G$ would introduce an odd cycle.

\begin{lemma}\label{lem:maximal_graph}
A maximal graph satisfying parity is tree-like.
\end{lemma}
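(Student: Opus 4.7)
The plan is to verify each of the four tree-like axioms separately, using the following maximality principle throughout: an edge $(v,k,v'')$ is in $G$ whenever adding it would create no odd cycle. So to establish the presence of an edge required by an axiom I check that adding it introduces no odd cycle, and to establish the absence of an edge forbidden by an axiom I exhibit a direct odd cycle in $G$.

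I would handle reflexivity, composition, and totality at even priorities first. Reflexivity at even $i$ is immediate: a self-loop $(v,i,v)$ contributes only the even priority $i$ to any cycle through it. Composition is also direct: if $(v,i,v')$ and $(v',j,v'')$ are edges, any cycle through a hypothetical new edge $(v,\max(i,j),v'')$ is shadowed in $G$ by a cycle that factors this edge into the two existing ones, at the same max priority. Totality at even $i$ is slightly more delicate: if neither $(v,i,v')$ nor $(v',i,v)$ were in $E$, each would be blocked by an odd cycle, producing a path from $v$ to $v'$ and a path from $v'$ to $v$ in $G$ whose max priorities are odd and strictly exceed $i$; concatenating them would already give an odd cycle in $G$.

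For the fourth axiom I would first dispatch the $(\Rightarrow)$ direction. Assume $(v,i,v') \in E$ with $i$ odd. Then $(v',i-1,v) \notin E$, otherwise $v \to^i v' \to^{i-1} v$ is an odd cycle in $G$. And $(v,i-1,v') \in E$: if not, adding it creates cycles that parallel existing $G$-cycles through $(v,i,v')$; the premise that these $G$-cycles are even (together with $i$ being odd) forces the path from $v'$ to $v$ to carry an even priority exceeding $i$, which also dominates the max of the new cycle, so no odd cycle would be created.

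The main obstacle is the $(\Leftarrow)$ direction of the fourth axiom. Assume $(v,i-1,v') \in E$, $(v',i-1,v) \notin E$, and suppose for contradiction that $(v,i,v') \notin E$. Using composition, any path from $v'$ to $v$ in $G$ collapses to a single edge $(v',j,v) \in E$ carrying its max priority. If adding $(v,i,v')$ creates an odd cycle, then $\max(i,j)$ is odd, and I would case-split on $j$. For $j \leq i-1$, reflexivity and composition (invoking the already-proved $(\Rightarrow)$ direction when $j$ is odd to first lower the priority to $j-1$) would upgrade $(v',j,v)$ to $(v',i-1,v) \in E$, contradicting the hypothesis. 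For $j = i$, the length-two cycle $v \to^{i-1} v' \to^i v$ is already odd. For odd $j \geq i+1$, the $(\Rightarrow)$ direction forces $(v, j-1, v') \notin E$, while $(v,i-1,v') \in E$ combined with reflexivity and composition forces $(v, j-1, v') \in E$, a contradiction. Every branch closes, so $(v,i,v') \in E$.
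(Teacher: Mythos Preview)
Your argument is correct and follows the same strategy as the paper: verify each tree-like axiom directly from maximality. For composition, totality at even priorities, and reflexivity your reasoning matches the paper essentially verbatim. The paper then dismisses the remaining two items as ``clear''; you actually supply the argument for the fourth axiom, which is not entirely trivial, so in that sense your write-up is more complete than the paper's.

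One simplification is available in your $(\Leftarrow)$ case analysis. Once you have $(v',j,v)\in E$ with $j\ge i$ odd, the two edges $(v,i-1,v')$ and $(v',j,v)$ already form a cycle in $G$ whose maximum priority is $j$, hence odd---so there is no need to separate $j=i$ from $j\ge i+1$ or to invoke the $(\Rightarrow)$ direction. Similarly, in the $j\le i-1$ case the detour through $j-1$ is unnecessary: composing $(v',j,v)$ directly with the reflexive edge $(v,i-1,v)$ yields $(v',\max(j,i-1),v)=(v',i-1,v)\in E$.
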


\begin{proof}
Let $G$ be a maximal graph satisfying parity.
We need to show that $G$ satisfies the following properties.
\begin{itemize}
	\item if $(v,i,v') \in E$ and $(v',j,v'') \in E$, then $(v,\max(i,j),v'') \in E$
	\item for $i$ even, $E_i$ is total: either $(v,i,v') \in E$ or $(v',i,v) \in E$ (possibly both)
	\item for $i$ even, $E_i$ is reflexive: $(v,i,v) \in E$
	\item for $i$ odd, $(v,i,v') \in E$ if and only if $(v,i-1,v') \in E$ and $(v',i-1,v) \notin E$
\end{itemize}

For the first item, if adding $(v,\max(i,j),v'') \in E$ would create an odd cycle, then replacing the edge by the two consecutive edges 
$(v,i,v') \in E$ and $(v',j,v'') \in E$ would yield an odd cycle, contradiction.

For the second item, if we do not have $(v,i,v') \in E$ then there exists a path from $v'$ to $v$ 
with maximal priority odd and larger than $i$,
and similarly for $(v',i,v) \in E$. If both cases would occur, this would induce an odd cycle.

The third and fourth items are clear.
\end{proof}

We can now prove the implication $1 \implies 2$.
Consider a $(n,d)$-graph $G$ satisfying parity, 
we construct a maximal graph $t$ satisfying parity by starting from $G$
and throwing in new edges as long as the graph satisfies parity.
We say that $t$ is a saturation of $G$, it is by construction a super graph of $G$
hence $G$ maps homomorphically into~$t$.
Thanks to Lemma~\ref{lem:maximal_graph}, $t$ is a tree.

\vskip1em
The implication $2 \implies 3$ is obtained by composing homomorphisms: 
indeed, since $T$ is universal there exists a homomorphism $\phi' : t \to T$, which yields
a homomorphism $\phi' \circ \phi : G \to T$.

\vskip1em
The implication $3 \implies 1$ is given by the following lemma, which is a classical argument 
here phrased using trees.

\begin{lemma}
Let $G$ be a graph such that there exists a tree $t$ and a homomorphism $\phi : G \to t$,
then $G$ satisfies parity.
\end{lemma}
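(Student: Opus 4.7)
The plan is to transport cycles from $G$ into $t$ via $\phi$ and exploit the tree-like structure there. Using the equivalent formulation recalled just before the definition of universal graphs, it suffices to show that every cycle of $G$ has even maximum priority. Since a graph homomorphism preserves edges together with their priorities, any cycle $v_0 \to v_1 \to \cdots \to v_k = v_0$ in $G$ with edge priorities $p_1, \ldots, p_k$ maps to a closed walk $\phi(v_0) \to \phi(v_1) \to \cdots \to \phi(v_k) = \phi(v_0)$ in $t$ carrying the same priority sequence. The problem thus reduces to showing that every closed walk of a tree-like graph has even maximum priority.

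For this reduction I would invoke the first tree-like axiom: whenever $(u,i,u')$ and $(u',j,u'')$ are edges of $t$, so is $(u,\max(i,j),u'')$. A routine induction on the length collapses any walk $w_0 \to w_1 \to \cdots \to w_k$ in $t$ with priorities $p_1, \ldots, p_k$ into a single edge $(w_0, \max_j p_j, w_k) \in E_t$. Applied to our closed walk, where $w_k = w_0 = \phi(v_0)$, this produces a self-loop on $\phi(v_0)$ whose label is the maximum priority $i$ of the original cycle. The excerpt already observes, immediately after the four tree-like axioms, that $E_i$ is non-reflexive for $i$ odd; hence $i$ must be even, which is exactly what we wanted.

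I do not expect a genuine obstacle here. The only mildly subtle point is that the $\phi$-image of a simple cycle in $G$ is a priori only a closed walk in $t$, possibly with repeated vertices, rather than a simple cycle; but the $\max$-transitivity argument handles walks of arbitrary length uniformly, so this distinction dissolves without any additional bookkeeping.
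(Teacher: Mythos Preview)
Your proof is correct and follows essentially the same line as the paper's: push a cycle of $G$ into $t$ via $\phi$, collapse the resulting closed walk to a single self-loop using the $\max$-transitivity axiom, and conclude via the non-reflexivity of $E_i$ for odd $i$. The paper phrases it as a contradiction (assuming the maximal priority is odd) rather than directly, but the argument is identical.
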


\begin{proof}
We consider a loop in $G$.
\[
(v_1, i_1, v_2) (v_2, i_2, v_3) \cdots (v_k, i_k, v_1)
\]
Let us assume towards contradiction that its maximal priority is odd and without loss of generality it is $i_1$.
Applying the homomorphism $\phi$ we have
\[
(\phi(v_1),i_1,\phi(v_2)) \in E,\
(\phi(v_2),i_2,\phi(v_3)) \in E,\ \ldots, \
(\phi(v_k),i_k,\phi(v_1)) \in E
\]
We obtain thanks to the first property of trees that
$(\phi(v_1),i_1,\phi(v_1)) \in E$, which contradicts that for $i$ odd $E_i$ is non-reflexive.
\end{proof}

\section{The three equivalences}

In order to prove Theorem~\ref{thm:main}, we show the following constructions.
\begin{itemize}
	\item From a $(n,d)$-universal tree we construct a (deterministic) $(n,d)$-separating automaton
	\item From a $(n,d)$-universal graph we construct a $(n,d)$-universal tree
	\item From a (non-deterministic) $(n,d)$-separating automaton we construct a $(n,d)$-universal graph
\end{itemize}
In each of these constructions the object constructed has exacly the same size as the original object (for the respective notions of sizes).

\subsection*{From a universal graph to a universal tree}

\begin{lemma}
A maximal $(n,d)$-universal graph is a $(n,d)$-universal tree.
\end{lemma}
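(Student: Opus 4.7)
The plan is to recognise that a maximal $(n,d)$-universal graph $\U$ is in fact a maximal graph satisfying parity, invoke Lemma~\ref{lem:maximal_graph} to identify $\U$ with a tree $T$, and then transfer universality from $\U$ as a graph to $T$ as a tree via the Fact relating tree embedding to tree-like graph homomorphism.

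First I would show that maximality of $\U$ among $(n,d)$-universal graphs coincides with maximality among $(n,d)$-graphs satisfying parity. The key observation is that adding an edge to a universal graph can only destroy universality by destroying parity: if the enlarged graph still satisfies parity, then since every homomorphism $G \to \U$ is also a homomorphism into the super graph (post-compose with the identity, which is a homomorphism from $\U$ into the super graph), the enlarged graph still receives all $(n,d)$-graphs satisfying parity, hence is itself universal. So a maximal $(n,d)$-universal graph $\U$ must be maximal among parity-satisfying graphs, and Lemma~\ref{lem:maximal_graph} applies to give that $\U$ is tree-like. Under the identification between trees and tree-like graphs, $\U$ becomes a tree $T$ whose leaves are exactly the vertices of $\U$, so $T$ has the same size as $\U$.

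Next I would prove that this tree $T$ is $(n,d)$-universal as a tree. Pick any $(n,d)$-tree $t$ and view it as a tree-like graph with $n$ vertices. Applying the last lemma of Section~4 to the identity homomorphism $t \to t$ shows that $t$, as a tree-like graph, satisfies parity; in particular $t$ is a $(n,d)$-graph satisfying parity. Universality of $\U$ then supplies a homomorphism $t \to \U$, which, since $\U$ is the tree-like graph of $T$, is the same thing as a homomorphism of tree-like graphs $t \to T$. By the stated Fact, this is equivalent to $t$ embedding in $T$. Since $t$ was an arbitrary $(n,d)$-tree, $T$ is $(n,d)$-universal.

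The step I expect to be the main obstacle is the first one, namely formalising that ``maximal universal'' coincides with ``universal and maximal satisfying parity''. Once that bridge is in place, Lemma~\ref{lem:maximal_graph} gives the tree structure for free, and the Fact together with the parity lemma of the previous section converts the graph-theoretic universality of $\U$ into tree-theoretic universality of $T$ with no further calculation.
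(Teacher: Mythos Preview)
Your proposal is correct and follows essentially the same route as the paper, which dispatches the lemma in a single line: ``This is a direct corollary of Lemma~\ref{lem:maximal_graph}.'' Your argument is precisely the unpacking of that corollary: you make explicit (i) that a maximal universal graph is automatically a maximal graph satisfying parity (because enlarging the target of a homomorphism preserves the homomorphism), so Lemma~\ref{lem:maximal_graph} applies; and (ii) that graph-universality of a tree-like graph transfers to tree-universality via the Fact relating embeddings and homomorphisms. The paper leaves both of these implicit, so your write-up is more detailed but not a different approach.
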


This is a direct corollary of Lemma~\ref{lem:maximal_graph}.
It follows that from a $(n,d)$-universal graph, one constructs a $(n,d)$-universal tree by taking a saturation of the universal graph,
which preserves the size.

\subsection*{From a separating automaton to a universal graph}

Let $\A$ be a $(n,d)$-separating automaton. 
We will do the proof assuming that $\A$ is deterministic, and later explain why it straightforwardly extend to non-deterministic automata.
We write $\delta(u)$ for the state reached after reading the word $u$ in $\A$ from the initial state; 
note that $\delta(u)$ might be undefined.
Without loss of generality all states in $\A$ are reachable.
We construct a graph $G$ as follows.
The set of vertices is the set of states of $\A$, and $(v,i,v') \in E$ if $v' = \delta(v,i)$.

\begin{lemma}
The following two properties hold.
\begin{itemize}
	\item $G$ satisfies parity
	\item Any saturation of $G$ is universal
\end{itemize}
\end{lemma}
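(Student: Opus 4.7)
For the first item, consider any cycle $(v_1,i_1,v_2)(v_2,i_2,v_3)\cdots(v_k,i_k,v_1)$ in $G$: by construction it corresponds to a loop of transitions of $\A$, so iterating it yields an infinite run of $\A$ on the $\omega$-word $(i_1 \cdots i_k)^\omega$ starting from $v_1$. Since every state of $\A$ is reachable, there exists a finite word $u$ with $\delta(u) = v_1$, and then $u \cdot (i_1 \cdots i_k)^\omega$ admits an infinite run from the initial state, hence is accepted by the safety automaton $\A$. By the second defining property of a separating automaton this $\omega$-word must satisfy $\Parity$, so $\max(i_1,\ldots,i_k)$ is even. Therefore every cycle in $G$ is even, and $G$ satisfies parity.

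For the second item, fix a saturation $G'$ of $G$; by Lemma~\ref{lem:maximal_graph} it is tree-like. The goal is to show that every $(n,d)$-graph $H$ satisfying parity admits a homomorphism $H \to G'$. Using the implication $1 \Rightarrow 2$ of Theorem~\ref{thm:progress_measure}, the first move is to saturate $H$ into a tree-like $(n,d)$-graph $t$ equipped with a homomorphism $H \to t$, reducing the task to building a homomorphism $\phi : t \to G'$. The plan is to use the first separating property to set $\phi(v) = \delta(w_v)$, where $w_v$ is the priority word of a chosen path in $t$ ending at $v$: since $t$ is tree-like every such finite word extends to an infinite path in $t$, whose priority word lies in $\Parity_n \subseteq L(\A)$, so $\delta(w_v)$ is indeed defined. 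To verify that $\phi$ is a homomorphism, I will invoke the maximality of $G'$: showing $(\phi(v),i,\phi(v')) \in G'$ for each edge $(v,i,v') \in t$ reduces to showing that adding this edge to $G'$ creates no odd cycle.

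The main obstacle is to derive a contradiction from such a hypothetical odd cycle, which necessarily consists of the new edge followed by a path $P$ in $G'$ from $\phi(v')$ back to $\phi(v)$ whose priorities combined with $i$ have odd maximum. The plan is to lift this configuration back to an $\omega$-word that is simultaneously accepted by $\A$ and violates parity, contradicting the second separating property. Two subtleties need care. First, the family of words $\{w_v\}_{v \in V(t)}$ must be chosen coherently---for instance via a tree-guided traversal of $t$ so that the word $w_{v'}$ extends $w_v$ by $i$ along an edge $(v,i,v') \in t$---otherwise the images will not align edge-by-edge. Second, edges of $G'$ contributed by saturation are not transitions of $\A$, so the lifting has to exploit the defining property that a saturation edge $(q,j,q')$ exists precisely because no path in $G$ from $q'$ to $q$ combined with $j$ is odd; tracing these back through the corresponding runs of $\A$ is the technical heart of the argument. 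Once the construction is carried out for deterministic $\A$, the non-deterministic case is obtained by replacing $\delta(w_v)$ by the state reached along a fixed accepting run of $\A$ on an infinite extension of $w_v$.
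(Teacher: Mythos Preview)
Your argument for the first item is correct and is essentially the paper's one-line observation, with the reachability step spelled out.

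For the second item your plan diverges from the paper and has a genuine gap at exactly the place you label ``the technical heart''. Once you commit to $\phi(v)=\delta(w_v)$ for a \emph{single} chosen word $w_v$, checking $(\phi(v),i,\phi(v'))\in G'$ for an arbitrary edge $(v,i,v')$ forces you into the odd-cycle argument, and the hypothetical back-path $P$ from $\phi(v')$ to $\phi(v)$ lives in $G'$, not in $G$. Your proposed handle---``a saturation edge $(q,j,q')$ exists precisely because no path in $G$ from $q'$ to $q$ combined with $j$ is odd''---is only one implication: if $(q,j,q')\in G'$ then indeed no such $G$-path exists (otherwise $G'$ would contain an odd cycle), but the converse would amount to showing that saturation is determined by $G$-paths alone, which you have not argued and which is exactly what is at stake. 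The ``coherent choice'' of the $w_v$ can align $\phi$ along at most a spanning subgraph of $t$, so the problematic case cannot be avoided. The detour through saturating $H$ into $t$ also buys nothing here.

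The paper sidesteps all of this with one definition. Working directly with $H$, it uses that the saturation $G'$ is tree-like (Lemma~\ref{lem:maximal_graph}), so $E_0$ is a total preorder, and sets
\[
\phi(v)\ =\ \max\nolimits_{E_0}\ \bigl\{\,\delta(\pi(\rho))\ :\ \rho\ \text{path of }H\text{ ending in }v\,\bigr\}.
\]
Now for an edge $(v,i,v')$ of $H$, pick $\rho$ realising the maximum for $v$; then $\rho'=\rho\cdot(v,i,v')$ ends in $v'$, so $(\delta(\pi(\rho')),0,\phi(v'))\in E$ by definition of the maximum, while $(\phi(v),i,\delta(\pi(\rho')))\in E$ because it is an actual transition of $\A$ and hence an edge of $G\subseteq G'$. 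The tree-like closure property yields $(\phi(v),i,\phi(v'))\in E$. No odd-cycle chase and no lifting of saturation edges is needed; taking the maximum over \emph{all} paths ending at $v$ is precisely the device that makes the images line up automatically. The non-deterministic case is then the obvious modification: take the $E_0$-maximum over all states reachable along some run on some $\pi(\rho)$.
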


\begin{proof}
To see that $G$ satisfies parity, we observe that all cycles of $G$ are even, since otherwise $\A$ would accept a word not satisfying parity.
We consider a saturation of $G$ which for the sake of simplicity we also call $G$.
Thanks to Lemma~\ref{lem:maximal_graph} we know that the relation $E_0$ is a total order.

Let $H$ be a $(n,d)$-graph satisfying parity, we construct a homomorphism $\phi$ from $H$ to $G$.
We need to associate to any vertex $v$ of $H$ a vertex $\phi(v)$ of $G$.
Define
\[
\phi(v) = \max_{E_0} \set{\delta(\pi(\rho)) : \rho \text{ path of } H \text{ ending in } v},
\]
where $\pi(\rho)$ projects the path $\rho$ onto a sequence of priorities.
Note that since $H$ satisfies parity, for each path $\rho$ of $H$ we have that $\delta(\pi(\rho))$ is well defined.

We verify that $\phi$ is a homomorphism.
Let $(v,i,v') \in E$, we show that $(\phi(v),i,\phi(v')) \in E$.
By definition $\phi(v) = \delta(\pi(\rho))$ for $\rho$ some path ending in $v$.
Note that $\rho' = \rho \cdot (v,i,v')$ is a path ending in $v'$, so by definition $(\delta(\pi(\rho')),0,\phi(v')) \in E$.
Since $\delta(\pi(\rho')) = \delta(\delta(\pi(\rho)),i)$ and $\phi(v) = \delta(\pi(\rho))$ we have $(\phi(v),i,\delta(\pi(\rho'))) \in E$.
It follows thanks to the first property of trees that $(\phi(v),i,\phi(v')) \in E$.
\end{proof}

The proof goes through if $\A$ is non-deterministic with two adjustments
\begin{itemize}
	\item $\phi(v)$ is the maximum for the $E_0$ order over all states reachable through some path of $H$ ending in $v$
	\item we do not have that $\phi(v) = \delta(\pi(\rho))$, but only that $\phi(v) \in \delta(\pi(\rho))$
\end{itemize}

\subsection*{From a universal tree to a separating automaton}

Let $T$ be a $(n,d)$-universal tree. 
We construct an automaton $\A$ as follows.
The set of states is the set of leaves of $T$. 
The initial state is the minimum element for the total order $E_0$.
The transition function is defined as follows.
\[
\delta(v,i) = \min_{E_0} \set{v' : (v,i,v') \in E}
\]
Note that $\A$ is deterministic.

\begin{lemma}
The automaton $\A$ is $(n,d)$-separating.
\end{lemma}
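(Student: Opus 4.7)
The plan is to verify the two defining conditions of an $(n,d)$-separating automaton in turn.

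For the rejection condition, I would observe that by construction every transition $\delta(u,i)=v$ of $\A$ corresponds to an edge $(u,i,v)\in E$ in the tree-like graph $T$, so every (finite or infinite) run of $\A$ reads the sequence of priorities of a genuine path in $T$. Since $T$ is tree-like it satisfies parity: a cycle whose maximal priority $M$ were odd would, by iterated application of the first tree-like property, yield an edge $(v,M,v)\in E$ contradicting the non-reflexivity of $E_M$. Therefore every word accepted by $\A$ lies in $\Parity$, which is exactly the second condition of the definition.

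For the acceptance condition, let $G$ be a $(n,d)$-graph satisfying parity. Since $T$ is universal, Theorem~\ref{thm:progress_measure} provides a homomorphism $\psi : G \to T$. Let $\rho = (v_1,i_1,v_2)(v_2,i_2,v_3)\cdots$ be a path of $G$, and let $q_0,q_1,\ldots$ denote the run of $\A$ on the priorities $i_1 i_2 \cdots$, where $q_0$ is the initial state (the $E_0$-minimum leaf of $T$). I claim by induction on $k \geq 0$ that $q_k$ is defined and that $(q_k,0,\psi(v_{k+1})) \in E$ holds in $T$. The base case is immediate from the choice of $q_0$. For the induction step, applying the first tree-like property to the edges $(q_k,0,\psi(v_{k+1}))$ and $(\psi(v_{k+1}),i_{k+1},\psi(v_{k+2}))$ — the latter being the image under $\psi$ of the $G$-edge $(v_{k+1},i_{k+1},v_{k+2})$ — yields $(q_k,i_{k+1},\psi(v_{k+2}))\in E$. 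Hence the set $\{v' : (q_k,i_{k+1},v')\in E\}$ is non-empty and $q_{k+1}=\delta(q_k,i_{k+1})$ is well-defined; by the minimality built into the definition of $\delta$, we have $q_{k+1} \leq_{E_0} \psi(v_{k+2})$, i.e.\ $(q_{k+1},0,\psi(v_{k+2}))\in E$, closing the induction. This shows that the run of $\A$ on $\rho$ never gets stuck, so $\rho$ is accepted.

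The whole argument is driven by the first tree-like property, which both forces $T$ to satisfy parity and propagates the inductive invariant. The only design choice one has to spot is that the invariant should be phrased in $E_0$-terms, which matches the $\min_{E_0}$ appearing in the definition of $\delta$ — this is precisely \emph{why} the automaton is defined with that minimum rather than with an arbitrary choice of successor. I do not expect any other genuine obstacle.
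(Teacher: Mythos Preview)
Your proof is correct and follows essentially the same line as the paper's. The acceptance argument is identical: the invariant $(q_k,0,\psi(v_{k+1}))\in E$ and its inductive propagation via the first tree-like property and the $\min_{E_0}$ definition of $\delta$ is exactly what the paper does. For the rejection condition, the paper unwinds the mechanism explicitly (after each finite segment with odd maximal priority $i$ the state strictly $E_i$-increases, yielding an impossible infinite increasing chain), whereas you observe more directly that the run traces a path in the tree-like graph $T$, which satisfies parity; this is a clean shortcut but not a different idea.
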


\begin{proof}
Consider a $(n,d)$-graph $G$ satisfying parity, we show that $\A$ accepts all paths in $G$.
Thanks to Theorem~\ref{thm:progress_measure}, there exists a tree and a homomorphism $\phi : G \to t$.

We show by induction that for a finite path $\rho$ in $G$ ending in $v$, 
we have $(\delta(\pi(\rho)), 0, \phi(v)) \in E$.
To initialise we recall that the initial state is the minimal element for $E_0$.
For the inductive case let $\rho' = \rho \cdot (v,i,v')$, we are looking at $\delta(\pi(\rho')) = \delta(\delta(\pi(\rho)),i)$.
Since $(v,i,v') \in E$ we have $(\phi(v),i,\phi(v')) \in E$,
and by induction hypothesis we have $(\delta(\pi(\rho)), 0, \phi(v)) \in E$,
the combination of this implies that $(\delta(\pi(\rho)), i, \phi(v')) \in E$.
It follows by definition of $\delta$ that $(\delta(\pi(\rho')), 0, \phi(v')) \in E$.

To see that the automaton rejects all paths not satisfying parity, consider a path $\rho$ with odd maximal priority $i$.
We note that for any state $v$ we have $(v,i,\delta(v,\pi(\rho))) \in E$, which is easily shown by induction.
Since a path not satisfying parity contains a suffix consisting of infinitely many paths 
each of which has odd maximal priority $i$, it follows that the automaton eventually rejects such a path.
Indeed, $i$ being odd $E_i$ is a non-reflexive preorder, so the sequence of states at the beginning of each loop 
would be an infinite increasing sequence of states.
\end{proof}

\section*{Acknowledgments}
We would like to thank Marcin Jurdzi{\'n}ski for his detailed comments on the first version of this document.

\bibliographystyle{alpha}
\bibliography{bib}

\end{document}